\newcommand{\pd}{\partial}
\DeclareMathOperator{\diag}{diag}
\def\kk{k}
\newcommand{\Dd}{\EuScript{D}^2}
\newcommand{\Dh}{\Dd_\mathrm{H}}
\newcommand{\DNh}{\tilde{\EuScript{D}}^2_\mathrm{H}}
\newtheorem{lemma}{Lemma}
\newtheorem{claim}{Claim}
\title{
Dynamics with Infinitely Many Time Derivatives in
Friedmann-Robertson-Walker background and Rolling Tachyons}
\author{Liudmila Joukovskaya\thanks{E-mail: \texttt{l.joukovskaya@damtp.cam.ac.uk}}\\
Centre for Theoretical Cosmology,\\ DAMTP, CMS, University of Cambridge,\\
Wilberforce Road, Cambridge CB3 0WA, UK}
\begin{document}

\maketitle

\date{}

\begin{abstract}
Open string field theory in the level truncation approximation is considered.
It is shown that the energy conservation law determines existence of rolling
tachyon solution. The coupling of the open string field theory action to a
Friedmann-Robertson-Walker metric is considered which leads to a new time
dependent rolling tachyon solution is presented and
possible cosmological consequences are discussed.
\end{abstract}

\section{Introduction}

Consideration of fundamental theories such as M/String Theory
in the cosmological context continues to attract attention in the literature.
One of the interesting questions is the role of tachyon in
String Theory and Cosmology. The great progress in our understanding
of tachyon condensation was made in the past
decade\cite{Sen2002,Sen2004,Gibbons}, but a lot of interesting issues
are still open. Among the most important ones is a better
understanding of the dynamics of tachyon condensation process.

In this context many works have been devoted to the study of
time-dependent solutions. Probably one of the most fascinating
frameworks for this is Open String Field Theory (OSFT) \cite{Witten} which
reattracted a lot of attention after a recent work \cite{Schn-2005}.
Despite the recent renewal of interest to OSFT no
smooth solutions interpolating between two inequivalent vacua
even at the lowest level truncation order \cite{Kost-Sam, MSZ} were found.
One of the exceptional features of the level truncation approach is
that corresponding action contains infinitely many time derivatives,
i.e. it is non-local.
Resulting models have a rich set of properties that might be essential for
the development of stringy cosmology. Substantial investigation
of this type of models was performed
\cite{MZ,Yang,AJK,Y,LJYV,FGN,LJ-TMF,VS-YV,VS,Prokh,BarnKam,AJ,
Arefeva,Calcagni,AJV-JHEP,Lulya-paper,
LJ-twofields,CMN,p-adic-infl,Lidsey,AJV-JHEP-2,diff-infl,NM} after
seminal paper \cite{MZ} where many useful issues were revealed,
among which the presence of rolling solution with widely
increasing oscillations. Further research of time-dependent rolling
solutions was also performed in particular in \cite{FH-2003}. Recently several
time dependent rolling tachyon solutions in OSFT were found \cite{Sch-2007,KORZ},
which confirmed puzzling
behavior of the solutions found earlier \cite{MZ,FH-2003,FH-2004}.
In these works investigations were performed in usual space-time
coordinates. At the same time is it known that in the light-cone
gauge the theory becomes local in light-cone time \cite{KaKi}.
In fact the question of the identity of light-like and time-like
cases is under investigation. In the light-like case (with dilatonic
damping), the gradient flow forces tachyon to asymptote to the
true vacuum at late times \cite{HS}. In the time-like case, rolling
configuration between two non-equivalent vacua is forbidden by
energy conservation law in Minkowski space time if one considers
only tachyon excitation in the level truncation approximation
\cite{MZ,Lulya-paper-2}. Investigations in the last case though
did not take into account any effects of gravity, this makes such an
investigation inconsistent from cosmological point of view.

It turns out that if we consider the same action coupled to the
gravity in Friedmann-Robertson-Walker (FRW) metric the situation changes:
there appears a tachyon solution which tends to the true vacuum at late
times \cite{Lulya-paper-2}. It is interesting to note that because
dilaton appears from the same
string sector as graviton, inclusion of dilaton into the tachyon
action can qualitatively reproduce behavior of the tachyon in curved
space. Thus obtained solution has accordance with the results obtained
in light-like and time-like cases at least at the lowest level approximation.

In the present work we consider scalar field dynamics with infinitely many time derivatives
minimally coupled to the Minskowski and Friedmann-Robertson-Walker gravitational backgrounds.

The structure of the work is the following. In the first section we will
give a brief introduction and physical motivation.
In the second section the model which appears from Open
String Field Theory will be presented and problem of
existence of interpolating solutions between two inequivalent vacua
will be discussed. In the third section we will consider the model
minimally coupled to gravity and demonstrate an intriguing difference
compared to the case without gravity: the existence of desired solution.
Numerical techniques will be described in Sec. 4.
Results of numerical calculations will be presented in Sec. 5.
Finally we will summarize main results.

\section{The Model}

The action of bosonic cubic string field theory has the form
\begin{equation}
S=-\frac{1}{g_0^2}\int(\frac{1}{2}\Phi \cdot Q_B \Phi+\frac{1}{3} \Phi \cdot (\Phi \ast \Phi)),
\label{action-full}
\end{equation}
where $g_0$ is the open string coupling constant, $Q_B$ is BRST operator,
$\ast$ is noncommutative product and $\Phi$ is the open string field containing
component fields which correspond to all the states in string Fock space.

Considering only tachyon field $\phi(x)$  at the level (0,0)
the action (\ref{action}) becomes
\begin{equation}
S=\frac{1}{g_0^2}\int d^{26}x\left[
\frac{\alpha^{\prime}}{2}\phi(x)\square\phi(x)+\frac{1}{2}\phi^2(x)
-\frac{1}{3}K^3 \Phi^3(x)-\Lambda
\right],
\label{action}
\end{equation}
where  $\alpha^{\prime}$ is the Regge slope, $K=\frac{3
\sqrt{3}}{4}$, $\phi$ is a scalar field, $\Phi=e^{k\square_g}\phi$,
$k=\alpha^{\prime} \ln K$,
$\square=\frac1{\sqrt{-g}}\pd_{\mu}\sqrt{-g}g^{\mu\nu}\pd_{\nu}$ and
$\Lambda =\frac{1}{6} K^{-6}$ was added to the potential to set the
local minimum of the potential to zero according Sen's conjecture
\cite{Sen1999}. In what follows we will work in units where $g_0=1$.

 The action (\ref{action}) leads to equation of motion
\begin{equation}
(\alpha^{\prime} \square+1)e^{-2 k \square} \Phi=K^3 \Phi^2.
\label{eom}
\end{equation}
The Stress Tensor for our system is\footnote{Note that here and below integration over $\rho$
is understood as limit of the following regularization
$$
\int_0^1d \rho f(\rho)= \lim _{\epsilon_1\to +0} \lim _{\epsilon_2\to +0}
\int^{1-\epsilon_2}_{\epsilon_1} d \rho f(\rho).
$$}
\begin{equation}
T_{\alpha\beta}(x)=
-g_{\alpha\beta}
\left(
    \frac{1}{2}\phi^2
    -\frac{\alpha^{\prime}}{2}\partial_\mu\phi\partial^\mu\phi
    -\frac{1}{3}K^3\Phi^3-\Lambda
\right)
-\alpha^{\prime} \partial_\alpha\phi\partial_\beta\phi
\end{equation}
$$
-g_{\alpha\beta}\,\kk
\int_0^1 d\rho
\left[
    (e^{k\rho\square}K^3 \Phi^2)
    (\square e^{-\kk\rho\square}\Phi)
    +
    (\partial_{\mu}e^{\kk\rho\square}K^3 \Phi^2)
    (\partial^{\mu}e^{-\kk\rho\square}\Phi)
\right]
$$
$$
+2\kk\int_0^1d\rho~
(\partial_{\alpha}e^{\kk\rho\square}K^3 \Phi^2)
(\partial_{\beta}e^{-\kk\rho\square}\Phi).
$$
The energy is defined as $E(t)=T^{00}$
and pressure as $P(t)_i=-T_i^i$ (no summation) and for our system are
$$
\mathcal{E}=\mathcal{E}_k+\mathcal{E}_p+\Lambda
+\mathcal{E}_{nl1}+\mathcal{E}_{nl2},~~~
\mathcal{P}=\mathcal{E}_k-\mathcal{E}_p-\Lambda
-\mathcal{E}_{nl1}+\mathcal{E}_{nl2}
$$
where
$$
{\cal E}_{k}=\frac{\alpha^{\prime}}{2}(\pd \phi)^2,~~~
{\cal E}_{p}=-\frac{1}{2}\phi^2+\frac{K^3}{3}\Phi^3
$$
$$
{\cal E}_{nl1}=k\int_{0}^{1}d \rho \left(
e^{k\rho \square} K^3 \Phi^2 \right)
 \left(-\square e^{-k\rho\square}\Phi\right),
$$
$$
{\cal E}_{nl2}=-k\int_{0}^{1} d \rho \left(\partial
e^{k\rho \square}K^3 \Phi^2\right) \left(\partial e^{-k \rho \square}\Phi\right).
$$
In this paper we will be interested in  spatially homogeneous configurations
for which Beltrami-Laplace operator used above takes the form $\square_g=-\pd^2$
and nonlocal operator $e^{k \Box_g}$ becomes $e^{k \Box_g}=e^{-k \pd^2}$.

The symbol $e^{ \rho\partial^2 }\varphi$ comprehend as\footnote{It
is easier to use integral representation for the operator
$e^{ \rho\partial^2 }$ while considering Minkowski background,
in the FRW case though it becomes impossible to generalize such an approach
and we define the operator $e^{ \rho\partial^2 }$
in terms of solution of the boundary value problem for diffusion
equation, see \cite{Lulya-paper} for the details.}
\begin{equation}
e^{\rho\partial^2 }\Phi(t)=C_{\rho}[\Phi](t)
\label{int-repr}
\end{equation}
where
$$
C_{\rho}[\Phi](t)=\frac{1}{\sqrt{4 \pi \rho}}
\int_{-\infty}^{+\infty} e^{-\frac{(t-t^{\prime})^2}{4
\rho}}\Phi(t^{\prime})d t^{\prime}.
$$
Nonlocal terms ${\cal E}_{nl_1}$ and ${\cal E}_{nl_2}$  contain  $e^{-k
\rho\pd^2}$ which as it can be easily seen might lead to the growing
kernel in the integral representation of the non-local operator,
that is why we will try to avoid calculation of $e^{-k \rho\pd^2}$
and  will use the following representation for nonlocal energy terms
${\cal E}_{nl1}$ and ${\cal E}_{nl2}$ which are valid on the
equation of motion for the scalar field
$$
{\cal E}_{nl1}=k\int_{0}^{1}d \rho \left((-\alpha^{\prime} \pd^2+1)
e^{(2-\rho)k \pd^2}\Phi \right)
 \left(\pd^2 e^{k\rho\pd^2}\Phi\right),
$$
$$
{\cal E}_{nl2}=-k\int_{0}^{1} d \rho \left(\partial (-\alpha^{\prime} \pd^2+1)
e^{(2-\rho)k\pd^2}
\Phi\right) \left(\partial e^{k \rho \pd^2}\Phi\right).
$$

\subsection{Energy conservation}

Energy conservation laws always have the deep physical sense, note that because
of presence of infinitely many time derivatives we need to prove energy conservation explicitly.
Similar investigation was performed in \cite{MZ}, although in that work it was used
representation of pseudo-differential operator via  the summation over
infinite series expansion with which one always needs
to be very careful with regard to convergence issues.
The approach used here is advantageous from the point of view of
numerical calculations, because in order
to define action of the exponential operator we need to do only
one well defined integration.

Taking into account that models with different types of potentials
are currently under the consideration in the literature
\cite{Calcagni,diff-infl,Lulya-paper} we will show the energy
conservation for arbitrary potential.

\begin{claim}\footnote{For simplicity we will use symbolic notation for nonlocal
operator $e^{\pd^2}$ keeping in mind that it is in fact defined by (\ref{int-repr}),
also integration over $\rho$ must be understood as limit of the corresponding regularization
as indicated earlier.}
The Energy
$$
E=\frac{\alpha^{\prime}}{2}(\pd\phi)^2-\frac{1}{2}\phi^2+V(\Phi)+\Lambda+
\kk\int^1_0 d \rho~((-\alpha^{\prime} \pd^2 +1) e^{(2-\rho) \kk
\pd^2} \Phi) \overleftrightarrow{\partial}(\partial e^{\kk\rho\pd^2}
\Phi),
$$
is conserved on the solutions of equation of motion
$$
(-\alpha^{\prime} \partial^2+1)e^{2\kk\pd^2}\Phi=\frac{\pd
V(\Phi)}{\pd \Phi}
$$
of the corresponding actions
$$
S=\frac{1}{g_0^2}\int d^{26}x\left[
\frac{\alpha^\prime}{2}\phi(x)\square\phi(x)+\frac{1}{2}\phi^2(x)
-V(\Phi)-\Lambda \right],
$$
where $V(\Phi)$ is any polynomial potential
and $A\overleftrightarrow{\partial}B=A\pd B-B \pd A$.
\end{claim}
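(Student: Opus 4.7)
The strategy is to differentiate $E$ in $t$ and show that the local and nonlocal pieces cancel term by term on the equation of motion. Introduce the abbreviations
\[
A(\rho):=(-\alpha'\partial^2+1)\,e^{(2-\rho)k\partial^2}\Phi,\qquad B(\rho):=e^{k\rho\partial^2}\Phi,
\]
so that, expanding $A\overleftrightarrow{\partial}(\partial B)=A\,\partial^2 B-(\partial A)(\partial B)$, the nonlocal piece of $E$ equals $k\int_0^1 d\rho\,[A\,\partial^2 B-(\partial A)(\partial B)]$. The structural point that makes the bookkeeping work is that $A$ and $B$ satisfy conjugate diffusion equations in the auxiliary parameter,
\[
\partial_\rho B=k\partial^2 B,\qquad \partial_\rho A=-k\partial^2 A,
\]
which is immediate from the definitions.

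First I would differentiate the local contributions. A direct calculation gives
\[
\partial_t\!\left[\tfrac{\alpha'}{2}(\partial\phi)^2-\tfrac{1}{2}\phi^2+V(\Phi)+\Lambda\right]=-(\partial\phi)\bigl[(-\alpha'\partial^2+1)\phi\bigr]+V'(\Phi)\,\partial\Phi.
\]
Next I would differentiate the nonlocal piece; the two cross terms $(\partial A)(\partial^2 B)$ cancel, leaving $k\int_0^1 d\rho\,[A\,\partial^3 B-(\partial^2 A)(\partial B)]$. Using the diffusion identities above to trade each factor of $k\partial^2$ for a $\pm\partial_\rho$, the integrand collapses to a perfect $\rho$-derivative, $\partial_\rho\!\bigl(A\,\partial B\bigr)$, and the integral telescopes to the boundary contribution $[A\,\partial B]_{\rho=0}^{\rho=1}$.

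Finally I would evaluate the two endpoints. At $\rho=1$ the composition $e^{k\partial^2}\Phi$ formally returns the bare field $\phi$, so $A(1)=(-\alpha'\partial^2+1)\phi$ and $\partial B(1)=\partial\phi$. At $\rho=0$ the equation of motion $(-\alpha'\partial^2+1)e^{2k\partial^2}\Phi=V'(\Phi)$ gives $A(0)=V'(\Phi)$, while $\partial B(0)=\partial\Phi$. Substituting, the boundary term becomes $(\partial\phi)\bigl[(-\alpha'\partial^2+1)\phi\bigr]-V'(\Phi)\,\partial\Phi$, which is exactly the negative of the local piece, and so $\partial_t E=0$.

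The algebraic part of the argument is short; the real obstacle is justifying the symbolic manipulations of the pseudo-differential operators. Commuting $\partial_t$ with the $\rho$-integral, reading off $\partial_\rho e^{k\rho\partial^2}\Phi=k\partial^2 e^{k\rho\partial^2}\Phi$, and taking the endpoint limits $\rho\to 0^+,1^-$ all rely on the regularization $\epsilon_1,\epsilon_2\to 0^+$ flagged in the footnote; furthermore the identity $e^{k\partial^2}\Phi=\phi$ uses $e^{k\partial^2}$ as a formal inverse of $e^{-k\partial^2}$, which has no convergent heat-kernel representation and is well defined only on the class of fields for which the diffusion-equation definition of \eqref{int-repr} closes under this inversion. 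These analytic issues are exactly what forces the claim to be stated and verified directly at the level of the integral representation, rather than derived from the infinite series expansion as in \cite{MZ}.
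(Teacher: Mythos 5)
Your proof is correct and follows essentially the same route as the paper: differentiate $E$ in $t$, use the conjugate diffusion equations in $\rho$ to turn the nonlocal integrand into the exact derivative $\partial_\rho\bigl(A\,\partial B\bigr)$, telescope to the endpoints, and close with the equation of motion and $\phi=e^{k\partial^2}\Phi$. The only difference is organizational: the paper isolates the telescoping step as a separate Appendix identity for $\int_0^1 d\rho\,(e^{\rho\partial^2}\varphi)\overleftrightarrow{\partial^2}(e^{(1-\rho)\partial^2}\psi)$ (proved there by exactly your total-$\rho$-derivative argument, with the $\epsilon_1,\epsilon_2$ regularization you flag), whereas you carry it out inline.
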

\begin{proof} We compute
$$
\frac{d E(t)}{dt}=\alpha^{\prime} \pd\phi\pd^2 \phi-\phi\pd
\phi+\frac{\pd V(\Phi)}{\pd \Phi} \pd \Phi+ k\int^1_0 d
\rho((-\alpha^{\prime} \pd^2 +1)e^{(2-\rho)k\pd^2}\Phi)
\overleftrightarrow{\partial^2} (\partial e^{k\rho\pd^2}\Phi)
$$
Using following identity (for details on its derivation see Appendix)
$$
\int\limits_0^1 d\rho (e^{ \rho
\partial^2 }\varphi ) \overleftrightarrow{\partial^2} (e^{ (1
- \rho)\partial^2 } \phi)= \varphi \overleftrightarrow {e^{\partial^2 }}\phi,
$$
equation of motion and definition of field $\Phi$, we have
$$
\frac{d E(t)}{dt}=\alpha^{\prime} \pd \phi \pd^2 \phi-\phi\pd
\phi+\frac{\pd V}{\pd \Phi} \pd \Phi+\partial
\Phi\overleftrightarrow{e^{k\pd^2}}
 (\alpha^{\prime}\pd^2-1)e^{k\pd^2}\Phi=
 $$
 $$=
 \pd \Phi\left[(-\alpha^{\prime} \partial^2+1)e^{2k \pd^2}\Phi+\frac{\pd V}{\pd \Phi}
\right]=0.
$$
\end{proof}
In the next paragraph we  consider one important physical consequence
of this energy conservation law.

\subsection{Existence of the Rolling Solution}
We already indicated that from physical perspective we are interested in solutions
interpolating between two inequivalent vacua. We start our consideration by
looking for stationary configurations $\Phi_0$.
Substituting it into equation of motion (\ref{eom}) we get $\Phi_0=K^3\Phi_0^2$,
which has two constant solutions: $\Phi_0=0$ and $\Phi_0=K^{-3}$.
We should thus be looking for solutions interpolating between those
stationary points. The following claim though shows that energy conservation
forbids existence of such solutions.
\begin{claim}\footnote{Similar claim for $p$-adic string model was proved
in \cite{MZ}, which rules out the possibility that tachyon may roll monotonically
down from one extremum reaching the tachyon vacuum.}
There do not exist continuous solutions of equation (\ref{eom}) which satisfy
boundary conditions
\begin{equation}
\label{bc}
\lim \Phi(t)=
\begin{cases}
0,&t\to\infty,\\
K^{-3},&t\to-\infty
\end{cases}
\end{equation}
or vice-versa (in terms $t \to -t$).
\end{claim}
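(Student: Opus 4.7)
The plan is to derive a contradiction directly from the conservation law proved in the previous Claim by evaluating the conserved energy at the two asymptotic ends of a hypothetical interpolating solution. Assume, for contradiction, that a continuous $\Phi(t)$ satisfying (\ref{eom}) and the boundary conditions (\ref{bc}) exists. Because the energy $E(t)$ is time independent, its limits $\lim_{t\to+\infty}E(t)$ and $\lim_{t\to-\infty}E(t)$ must coincide. I will show that they do not.

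First I would evaluate $E$ on a constant configuration $\Phi\equiv\Phi_0$. Every term in the expression for $E$ except the local potential piece contains at least one factor of $\partial\Phi$ or $\square\Phi$: this is clear for $\mathcal{E}_k$, and for the nonlocal terms $\mathcal{E}_{nl1},\mathcal{E}_{nl2}$ one sees it from the two-sided derivative $\overleftrightarrow{\partial}$ appearing in the compact form of the energy in the Claim. Since $e^{k\rho\partial^2}\Phi_0=\Phi_0$ when $\Phi_0$ is constant, all these terms vanish, and the energy reduces to $E(\Phi_0)=-\tfrac12\Phi_0^2+\tfrac13K^3\Phi_0^3+\Lambda$. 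Substituting the two stationary values gives $E(0)=\Lambda=\tfrac16K^{-6}$ and $E(K^{-3})=-\tfrac12K^{-6}+\tfrac13K^{-6}+\Lambda=0$, so the two candidate values differ.

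Next I would justify that the genuine time-dependent energy of a putative interpolating solution actually approaches these constant-field values in the limits $t\to\pm\infty$. This is the step where real work is needed, because the nonlocal operator $e^{k\rho\partial^2}\Phi$ evaluated at time $t$ is the Gaussian convolution $C_{k\rho}[\Phi](t)$ defined in (\ref{int-repr}), which depends on values of $\Phi$ across all of $\mathbb{R}$ rather than only near $t$. Under the assumption that $\Phi$ is continuous and has the prescribed finite limits at $\pm\infty$ (with mild decay of $\partial\Phi$ at infinity, which a bounded-energy solution must enjoy anyway), dominated convergence applied to the Gaussian kernel shows $C_{k\rho}[\Phi](t)\to\Phi(\pm\infty)$ and $\partial C_{k\rho}[\Phi](t)\to 0$ as $t\to\pm\infty$, uniformly in $\rho\in[0,1]$. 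Consequently each term in $E(t)$ that carries a $\partial$ or a $\square$ acting on $\Phi$ dies at the endpoints, while the purely potential piece limits to $-\tfrac12\Phi(\pm\infty)^2+V(\Phi(\pm\infty))+\Lambda$.

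Combining the two steps gives $\lim_{t\to+\infty}E(t)=\tfrac16K^{-6}$ and $\lim_{t\to-\infty}E(t)=0$, in flat contradiction with the conservation law $E(t)\equiv\mathrm{const}$. The vice-versa case is identical after $t\mapsto -t$. I expect the principal obstacle to be the interchange of limits and integrals in the nonlocal step, and in particular handling the $\rho$-regularization of $\mathcal{E}_{nl1},\mathcal{E}_{nl2}$ described in the earlier footnote so that the two limits $\epsilon_1,\epsilon_2\to 0$ and $t\to\pm\infty$ can be taken consistently; the algebraic mismatch of potential values at the two vacua (which is simply Sen's conjecture together with the choice of $\Lambda$) is then what forces the contradiction.
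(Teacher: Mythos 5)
Your proposal is correct and follows essentially the same route as the paper: evaluate the conserved energy at the two vacua, find $E(0)=\Lambda=\tfrac16K^{-6}$ versus $E(K^{-3})=0$, and conclude that conservation forbids an interpolating solution. The only difference is that you explicitly flag and sketch the justification that $E(t)$ actually tends to the constant-configuration values as $t\to\pm\infty$ (control of the Gaussian convolutions and the $\rho$-regularization), a step the paper's one-line proof takes for granted.
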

\begin{proof}
Let us assume existence of such solution and calculate energy
at the extremum  points, we get  $E(\Phi=0)=\Lambda$
and $E(\Phi=K^{-3})=-\frac{1}{6}K^{-6}+\Lambda$, i.e. energy values at
$t \to +\infty$  and  $t \to -\infty$ are different what due to conservation
law rules out existence of solutions satisfying (\ref{bc}).
\end{proof}
As we can see energy conservation law plays crucial role in the existence of the time dependent
solutions in the level truncation approximation to OSFT.
The above statement could potentially be generalized to the case of full OSFT because for
the action with cubic interaction solution interpolating between maximum and minimum
in the effective potential has to interpolate between vacua with different energy.

\section{The Model Coupled to the Gravity}
\label{rolling-tachyon}

In this section we would like to consider tachyon dynamics in FRW
background what allows us to take into account gravity effects and
makes the research more consistent from cosmological point of view.
Consider the model
\begin{equation}
\label{action-gr}
S=\frac{1}{g_0^2} \int d^4x\sqrt{-g}\left(\frac{m_p^2}{2}R+
\frac{1}{2}\phi \square_g \phi
+\frac{1}{2}\phi^2-\frac{1}{3}K^3\Phi^3-\Lambda \right),
\end{equation}
here $m_p^2=g_0^2 M_{pl}^2$ and we will work in units where $\alpha^{\prime}=1$.
As a particular metric we will consider the FRW
$$
ds^2=-dt^2+a^2(t)(d x_1^2+dx_2^2+dx_3^2),
$$
for which the Beltrami-Laplace operator for spatially-homogeneous configurations
takes the form $\square_g=-\pd^2-3 H(t) \pd=-\Dh$.
Scalar field and Friedmann equations are
\begin{equation}
(-\Dh+1)e^{2k \Dh}\Phi = K^3 \Phi^2,
\label{eom-gr}
\end{equation}
\begin{equation}
3H^2=\frac{1}{m_p^2}~{\cal E},~~~
3H^2+2\dot H={}-\frac{1}{m_p^2}~{\cal P}.
\label{Fr-eq}
\end{equation}
Inclusion of the gravity considerably modifies the dynamics of the
system. In Minkowski background we have shown that energy
conservation law forbids dynamical interpolation between two
inequivalent vacua. In FRW background energy of the scalar field
alone does not conserve any more due to the Hubble term and as a result
the restrictions on existence of such solutions are removed.

To find boundary conditions for possible solutions let us consider
constant scalar field solution, $\Phi_0$.
In this case the scalar field equation (\ref{eom-gr}) becomes
\begin{equation}
\Phi_0 = K^3 \Phi_0^2,
\label{eom-gr-st}
\end{equation}
and first equation in (\ref{Fr-eq})
\begin{equation}
3H_0^2=\frac{1}{m_p^2}~{\cal E} (\Phi_0). \label{Fr-eq-st}
\end{equation}
Equation (\ref{eom-gr-st}) has two solutions: $\Phi_{0_1}=0$ and
$\Phi_{0_2}=K^{-3}$, substituting which into (\ref{Fr-eq-st}) we
obtain corresponding values for a Hubble parameter $H_{0_1}=(18
K^6)^{-1/2}$ and $H_{0_2}=0$. Note that from cosmological
perspective we are interested only in positive values for the Hubble
function, so we can expect rolling solutions with the following boundary conditions
\begin{equation}
\label{bc-gr}
\lim \Phi(t)=
\begin{cases}
0,&t\to\infty,\\
K^{-3},&t\to-\infty,
\end{cases}
~~~~~
\lim H(t)=
\begin{cases}
(18 K^6)^{-1/2},&t\to\infty,\\
0,&t\to-\infty,
\end{cases}
\end{equation}
or vice-versa (in terms of $t \to -t$).

\begin{figure}
\centering
\includegraphics[width=71mm]{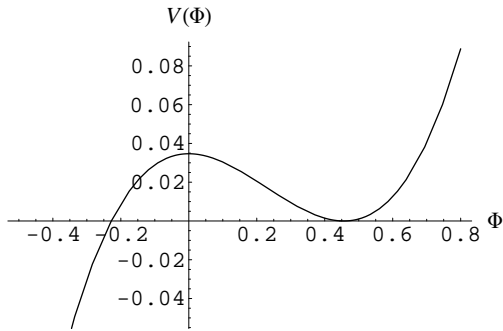}
\caption{The potential}
\label{cubic-potential}
\end{figure}

To analyze physical situation let us consider potential  in which
motion is expected. Naive extraction of potential from the model
action (\ref{action-gr}) results in
$V(\Phi)=-\frac{1}{2}\Phi^2+\frac{1}{3}K^3 \Phi^3+\Lambda$. The
constant  $\Lambda$ represents the $D$-brane tension which according
to Sen's conjecture must be added to cancel negative energy
appearing due to the presence of tachyon. We obtained two types
of solutions. The first one is an ordinary rolling solution which
starts from $\Phi=0$ and goes towards configuration $\Phi=K^{-3}$
which is associated with the true vacuum. This solution can be
interpreted as a description of the $D$-brane decay. The second one is
a rolling solution which goes in the opposite direction, which
appears in this model possibly because of the non-locality in the
interaction. It is known that nonlocal dynamics has many interesting
properties which are not possible in the local case. In particular
the ``slop effect'' \cite{MZ,FGN,AJ} which we can observe in the
obtained solutions (Fig.\ref{rolling-solution-ph-0const},
\ref{rolling-solution-ph-const0}) when the scalar field goes beyond
the values from which the scalar field configuration starts --
situation which is not possible in the local models. Potentially a
similar effect can initiate non-symmetry in the potential in
ekpyrotic \cite{Ekpyrotic} and cyclic cosmology \cite{Cyclic}.

\section{Numerical Solution Construction}

In order to construct numerical solution we operate with scalar field
equation of motion (\ref{eom-gr}) and the difference of equations
(\ref{Fr-eq}), specifically we solve
\begin{subequations}
\begin{equation}
\label{fr2-a}
(-\Dh+1)e^{2k \Dh}\Phi = K^3 \Phi^2,
\end{equation}
\begin{equation}
\label{fr2-b}
\dot{H}=-\frac{1}{2m_p^2}~(\cal P+\cal E).
\end{equation}
\end{subequations}
We discretize equations (\ref{fr2-a})-(\ref{fr2-b}) by introducing a lattice in $t$ variable and
then solving resulting system of nonlinear equations using iterative
relaxation solver controlling error tolerance with discrete $L_2$
and $L_{\infty}$ norms.

\subsection{Discretization}

When solving discretized equations (\ref{fr2-a})-(\ref{fr2-b}) the
nontrivial part from computational point of view is efficient
evaluation of $e^{2k\rho\Dh}\Phi$ for $\rho\in[0,2]$. This operator
could be interpreted in terms of initial value problem for the
following diffusion equation with boundary conditions
\cite{Lulya-paper}
\begin{equation}
\label{numdiff}
\partial_\rho\varphi(t,\rho)=
\partial^2_t\varphi(t,\rho)+3H(t)\partial_t\varphi(t,\rho),
\end{equation}
$$
\varphi(0,t)=\Phi(t),~~\varphi(\rho,\pm\infty)=\Phi(\pm\infty).
$$
Once solution of this equation is constructed we have $e^{2k\rho\Dh}\Phi(t)=\varphi(\rho,t)$.

To solve (\ref{numdiff}) we used second order Crank-Nicholson scheme which is based on
the following stepping procedure
$$
\varphi(t,\rho+\Delta_\rho)=
\left(1+k\Delta_\rho\DNh\right)\left(1-k\Delta_\rho\DNh\right)^{-1}\varphi+
o(\Delta_\rho^2\|\DNh\|),
$$
where $\DNh$ denotes discretization of $\Dh$ (it thus has a finite norm)
and $\Delta_\rho$ is a step size along $\rho$ variable.
We used the following tri-diagonal discretization scheme for $\Dh$
$$
\DNh=
\frac{1}{\Delta_t^2}
\begin{pmatrix}
-2     &      2 & 0      & \ldots & 0 \\
1      &     -2 & 1      & \ldots & 0 \\
\vdots & \ddots & \ddots & \ddots & \vdots \\
 0     & \ldots & 1      &-2      & 1 \\
 0     & \ldots &0       &-2      & 2
\end{pmatrix}
+
\frac{1}{2\Delta_t}\diag(H_{-N},\ldots,H_N)
\begin{pmatrix}
 0     &  0     & 0      & \ldots & 0\\
-1     &  0     & 1      & \ldots & 0\\
\vdots & \ddots & \ddots & \ddots & \vdots \\
 0     & \ldots &-1      & 0      & 1 \\
 0     & \ldots &0       & 0      & 0
\end{pmatrix},
$$
where $\Delta_t$ is the step size along $t$ and discretization of Hubble function is
$H_k=H(k\Delta_t)$.
This is a usual symmetric discretization scheme on the uniform lattice modified
on the interface to guarantee boundary condition (\ref{fr2-b}) for smooth solutions
which tend to constants as $t\to\pm\infty$.

\subsection{Comparison the different methods}

In order to exclude possible artifacts of the specific numerical
scheme described above we tried Chebyshev-pseudospectral method which
is known to generally have exponential convergence \cite{Forn}.
Such scheme is known to have very different properties \cite{Forn} compared to
finite difference scheme described above, but it produced the same
results up to the approximation error which provides confidence in
the existence of the rolling solutions reported in this work.

\section{Rolling Tachyon Solution}
\label{Rolling-Tachyon-solution}

\begin{figure}
\centering
\includegraphics[width=57mm]{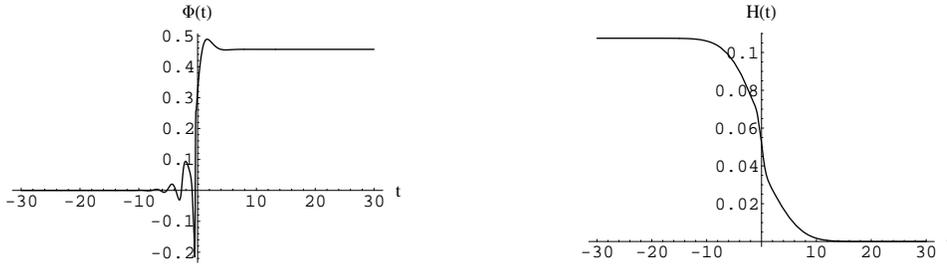}~~~~~~~~~~~~
\includegraphics[width=57mm]{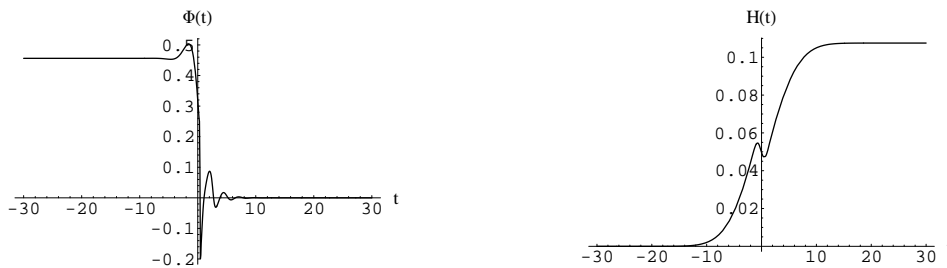}
\caption{Solutions of the scalar field (\ref{eom-gr}) and  Friedmann
equation (\ref{Fr-eq}) $\Phi$ and $H$ (left to right) for
$m_p^2=1$.} \label{rolling-solution-ph-0const}
\end{figure}

\begin{figure}[b]
\centering
\includegraphics[width=57mm]{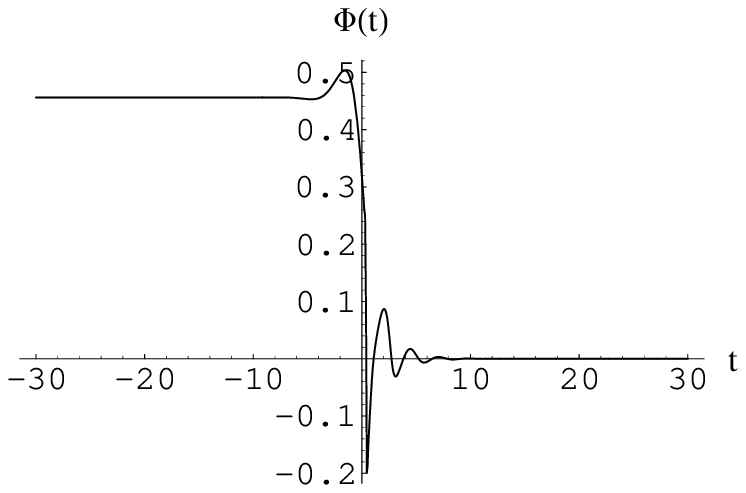}~~~~~~~~~~~~
\includegraphics[width=57mm]{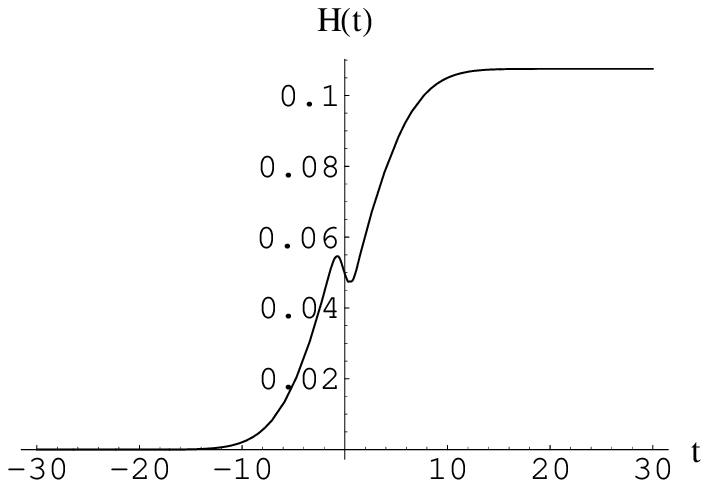}
\caption{Solutions of the scalar field (\ref{eom-gr}) and  Friedmann
equation (\ref{Fr-eq}) $\Phi$ and $H$ (left to right) for
$m_p^2=1$.} \label{rolling-solution-ph-const0}
\end{figure}

Solutions of (\ref{eom-gr}) and (\ref{Fr-eq}) are presented on Fig.\ref{rolling-solution-ph-0const}, \ref{rolling-solution-ph-const0}.
It is interesting to note that on these figures shapes of scalar field
look very similar up to reflection over vertical axis while shapes of
Hubble function are different. In order to ensure that this is not an
artefact or error in numerical calculations we performed the following test.
First, we performed reflection, $t\to-t$, of the scalar field from
Fig.\ref{rolling-solution-ph-0const} and compared in with scalar field from
Fig.\ref{rolling-solution-ph-const0}, while difference was small it was well
above the error tolerance. We also numerically computed discrepancies between
left and right sides of equations of motion where Hubble function was
from the solution while scalar field was taken from another solution and mirrored.
While the discrepancy for scalar field was small, the discrepancy for Hubble
function was very large, larger than the Hubble function itself.
This test makes us confident that qualitatively different shapes of
Hubble function presented on
Fig.\ref{rolling-solution-ph-0const}, \ref{rolling-solution-ph-const0} are correct.

Another interesting property of the solution on Fig.\ref{rolling-solution-ph-const0}
is that while scalar filed generally rolls down at first it climbs up, we
already noted this ``slop effect'' in the end of section \ref{rolling-tachyon}.
In fact similar behavior is typical for solutions of nonlocal equations.
In Minkowski metric such behavior was noted by many authors who used different
numerical techniques \cite{FGN,AJ,NM}. It is interesting that in FRW metric
this property is preserved.

\subsection{Solutions for different $m_p$}

Motivated by the fact that string scale does not exactly coincide
with Planck mass and as a consequence there is some freedom in settling $m_p^2$
we investigated how the shape of solutions behave for different $m_p^2$.
Solutions of (\ref{eom-gr}), (\ref{Fr-eq}) for different values of $m_p^2$
are presented on Fig.\ref{rolling-solution-ph-0const-diffmp} and
\ref{rolling-solution-ph-const0-diffmp}.

We can see that the profiles for different $m_p^2$ for the first
case  are very similar. It is interesting to note how on Fig.\ref{rolling-solution-ph-const0-diffmp} with the growth of $m_p^2$
oscillation in the profile of Hubble function disappears.

\begin{figure}
\centering
\includegraphics[width=51mm]{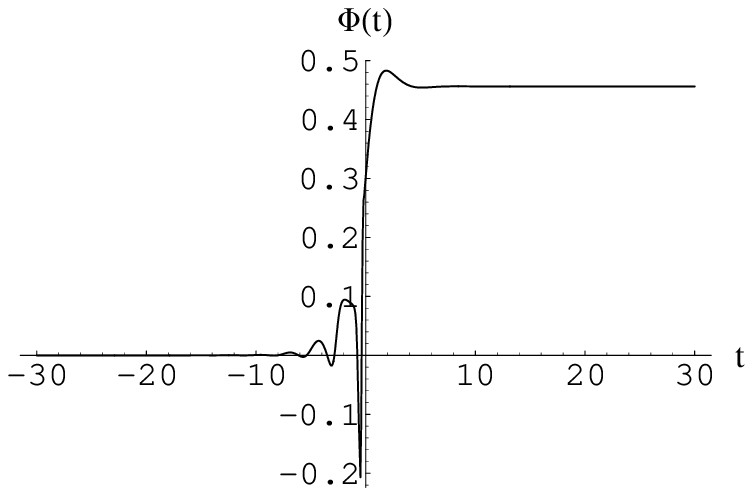}
\includegraphics[width=51mm]{f1.eps}
\includegraphics[width=51mm]{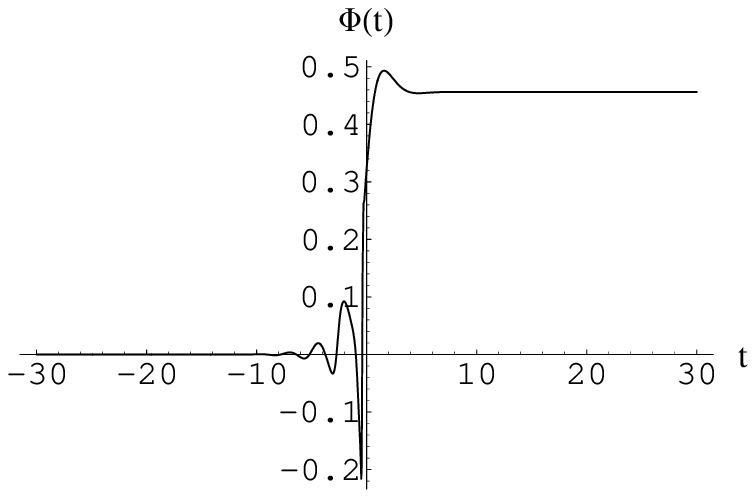}
\includegraphics[width=51mm]{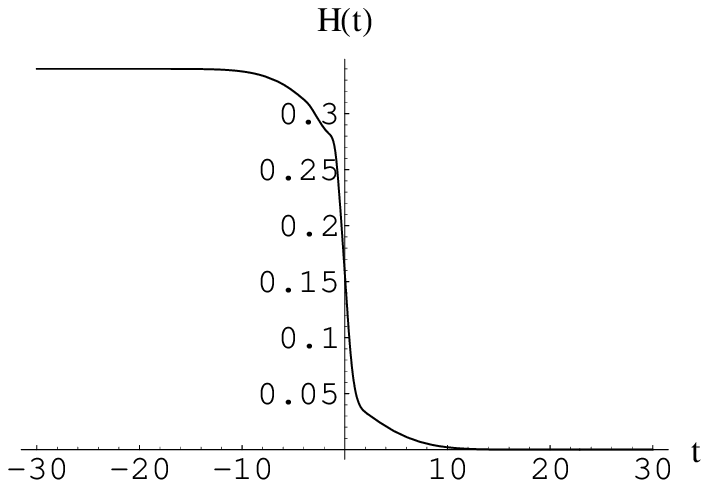}
\includegraphics[width=51mm]{H1.eps}
\includegraphics[width=51mm]{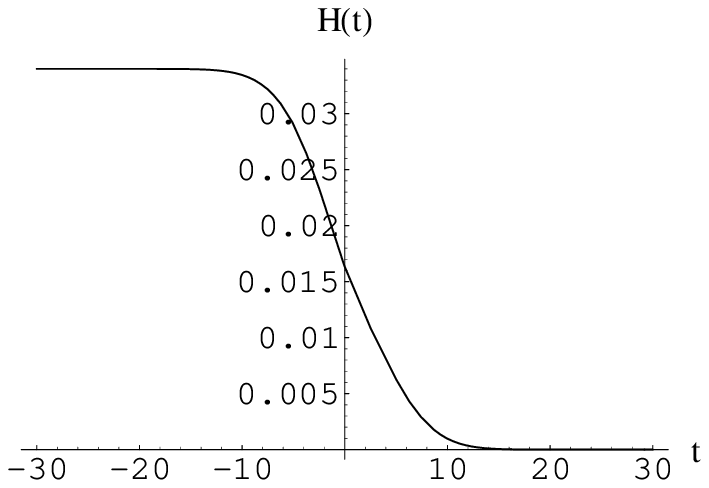}
\caption{Solutions of the scalar field (\ref{eom-gr}) and  Friedmann
equation (\ref{Fr-eq}) $\Phi$, $H$ for $m_p^2=0.1, 1, 10$ (left to right).
} \label{rolling-solution-ph-0const-diffmp}
\end{figure}

\begin{figure}
\centering
\includegraphics[width=51mm]{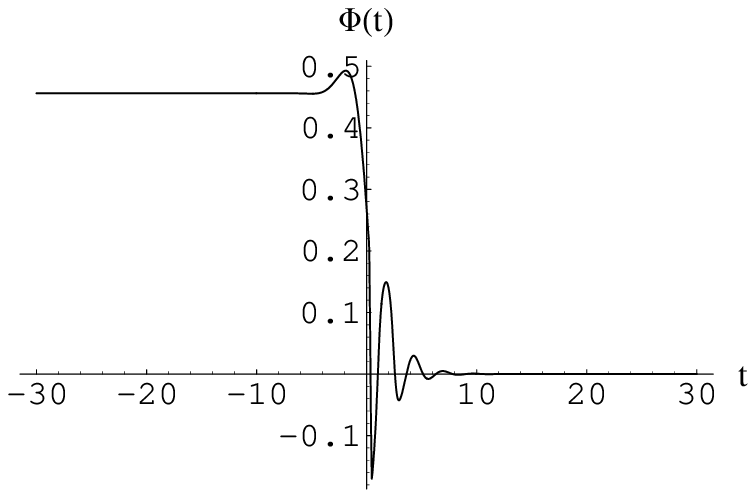}
\includegraphics[width=51mm]{f2.eps}
\includegraphics[width=51mm]{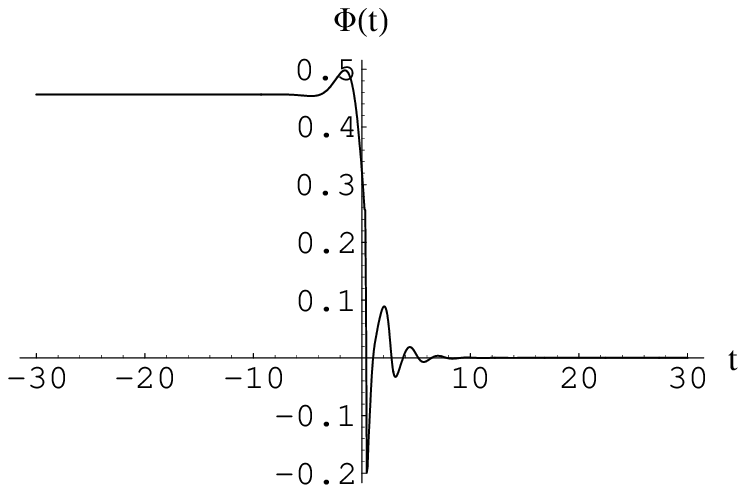}
\includegraphics[width=51mm]{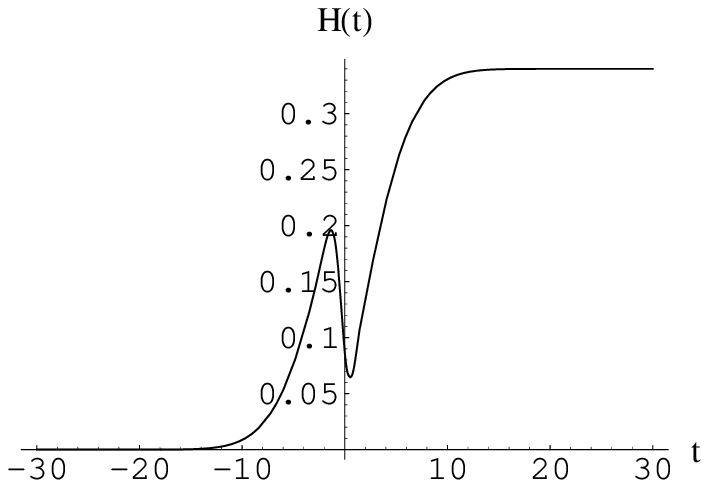}
\includegraphics[width=51mm]{H2.eps}
\includegraphics[width=51mm]{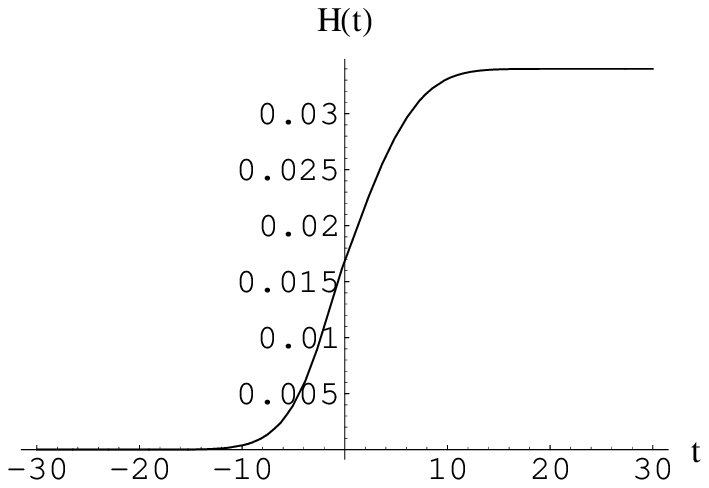}
\caption{Solutions of the scalar field (\ref{eom-gr}) and  Friedmann
equation (\ref{Fr-eq}) $\Phi$, $H$  for $m_p^2=0.1, 1, 10$ (left to right).
} \label{rolling-solution-ph-const0-diffmp}
\end{figure}

\subsection{Particular case of $p$-adic action}

The $p$-adic string model represents a popular toy-model that was
proposed in \cite{BFOW}. The model contains one scalar field and nonlocal
interaction. Formulated in arbitrary space-time dimensionality
the model has a parameter $p$, which is initially taken to be a
prime number. It was shown later the model describes quite well
some physical processes, despite its intrinsic limitations.
Recently this model started to attract attention as a
cosmological toy model for describing inflation \cite{p-adic-infl,
Lidsey, diff-infl}. The $p$-adic string model was later considered
as being minimally coupled to gravity in Friedmann-Robertson-Walker
metric. In this subsection we would like to show some intriguing
properties within this context.

The $p$-adic action is given by
\begin{equation}
S_p=\int d^d x L_p = \frac{1}{g_p^2} \int d^d x\left[ -\frac{1}{2}\phi
p^{-\frac{1}{2}\Box}\phi+\frac{1}{p+1}\phi^{p+1}
\right],~~~\frac{1}{g_p^2}=\frac{1}{g^2}\frac{p^2}{p-1}.
\label{p-adic action}
\end{equation}
The infinite number of space-time derivatives are manifest in the
pseudo-differential operator $p^{-\frac{1}{2}\Box}$, where
$\Box=-\pd^2+\nabla^2$, $p^{-\frac{1}{2}\Box}=e^{-\frac{1}{2}\ln p
\Box}$.

Considering action (\ref{p-adic action}) for $p=2$ after the
field redefinition $\varphi=e^{-\frac{1}{4}\ln p \Box}\phi$ we have
\begin{equation}
S_2=\int d^d x L_2 = \frac{1}{g_p^2} \int d^d x\left[
-\frac{1}{2}\varphi^2+\frac{1}{3}(e^{\frac{\ln 2}{4}
\Box}\varphi)^{3} \right] \label{p-adic action-p=2},
\end{equation}
which looks rather similar to the tachyon OSFT action in the level
truncation approximation without kinetic term with only
other difference in common signs.

Indeed, if we denote the approximate lagrangian for the OSFT case
$$
L_{OSFT_{approx}}=\frac{1}{2}\phi^2-\frac{1}{3}K^{3}(e^{k
\Box}\phi)^3
$$
we see that for the case of p-adic string model with $p=2$ if we
neglect the difference in the factors in exponential operator we
have
$$
L_{p=2}=-L_{OSFT_{approx}}.
$$

\begin{figure}
\centering
\includegraphics[width=51mm]{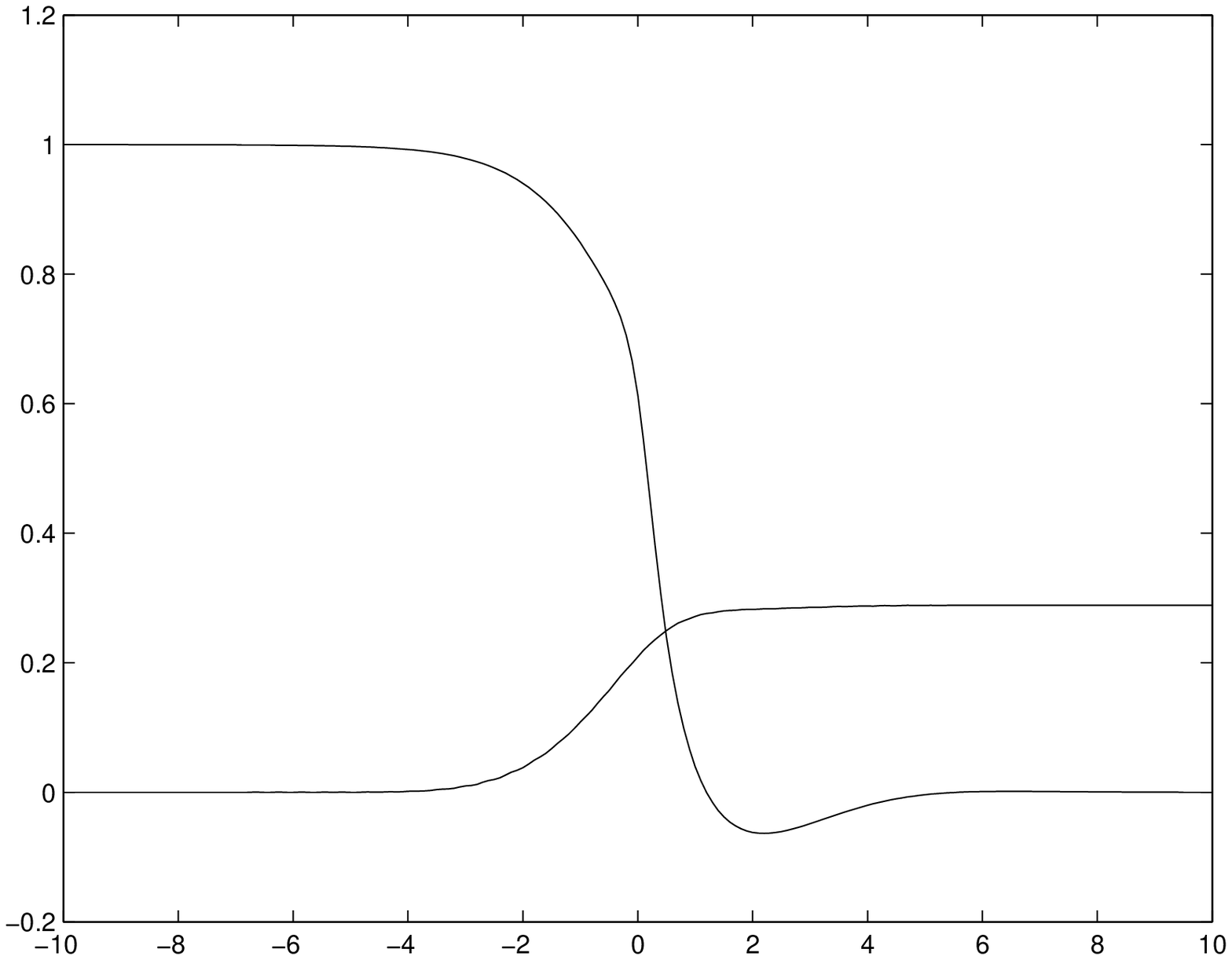}~~~~~~~~~~~~~~~
\includegraphics[width=51mm]{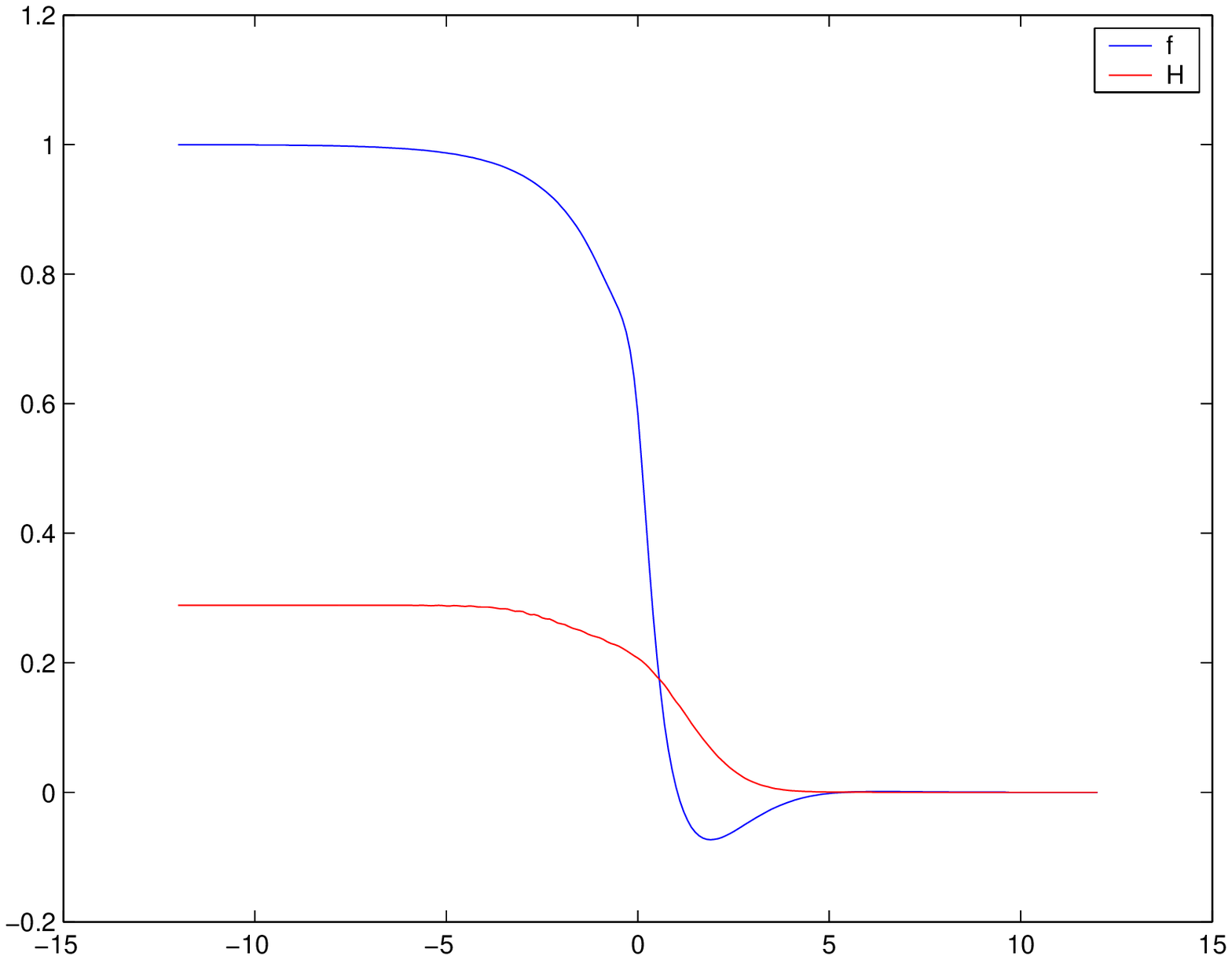}
\caption{Solutions of the Friedmann equations $\Phi$, $H$:
negative sign in front of the Lagrangian (left),
positive sign in front of the Lagrangian (right).} \label{pm}
\end{figure}

This observation does not affect dynamic in the case of
Minkowski space-time because sign in front of the Lagrangian does not
enter scalar field equation. The situation changes crucially in
Friedmann-Robertson-Walker background because the sign affects equation
for Hubble.
Thus if in the case of OSFT action in approximation neglecting
kinetic term we had monotonically increasing Hubble function
in the case of $p$-adic model for $p=2$ we obtain monotonically
decreasing Hubble function for the same scalar field configuration.
Corresponding solutions are presented on Fig.\ref{pm}.

On of the issues is that the $p$-adic sting model in the FRW case is not
currently known and the choosing the right sign is nontrivial,
especially due to the absence of usual canonical kinetic term.

\subsection{Energy and Pressure for Rolling Tachyon Solutions}
\begin{figure}
\centering
\includegraphics[width=51mm]{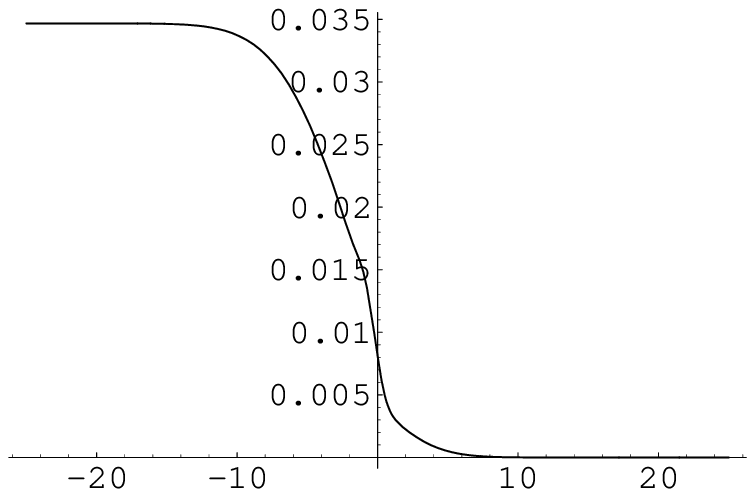}~~~~~~~~~~~~
\includegraphics[width=51mm]{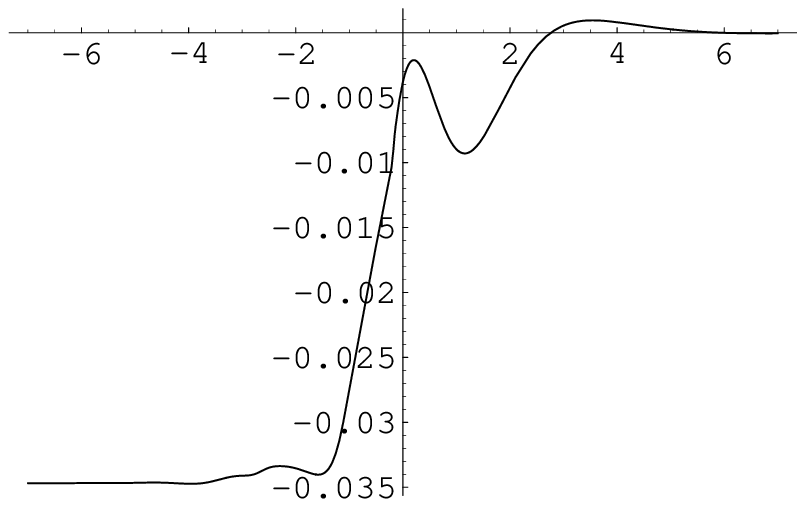}
\caption{Energy (left) and pressure (right) for solutions from Fig.\ref{rolling-solution-ph-0const}.}
\label{energy-pressure-0const}
\end{figure}

Usually when considering dynamics in nontrivial background such as FRW
it is instructive to consider simpler case of Minkowski metric.
The intriguing fact though is that there is no such a possibility since energy
conservation law forbids in Minkowksi case solutions like were presented in
Fig.\ref{rolling-solution-ph-0const}, \ref{rolling-solution-ph-const0}.
Nevertheless it is possible to qualitatively characterize new features of
the system.

Dynamics of energy and pressure for the scalar field from Fig.\ref{rolling-solution-ph-0const}
is presented on Fig.\ref{energy-pressure-0const}.
We can see that dynamics is different from what one might expect in Minkowski case.
We obtained solution for which both energy and pressure nontrivially tend to zero
at large times.
For the solution from Fig.\ref{rolling-solution-ph-const0}
the energy and pressure dynamics is different, see Fig.\ref{energy-pressure-const0}.
The pressure starts from zero and goes to a negative constant, while
the energy starts from zero and goes to a constant of the same value with
positive sign.

In both cases we obtained nontrivial partially negative pressure and
like in \cite{Gibbons} nontrivial equation of state what might be
especially interesting in the light of cosmological applications.

\begin{figure}
\centering
\includegraphics[width=51mm]{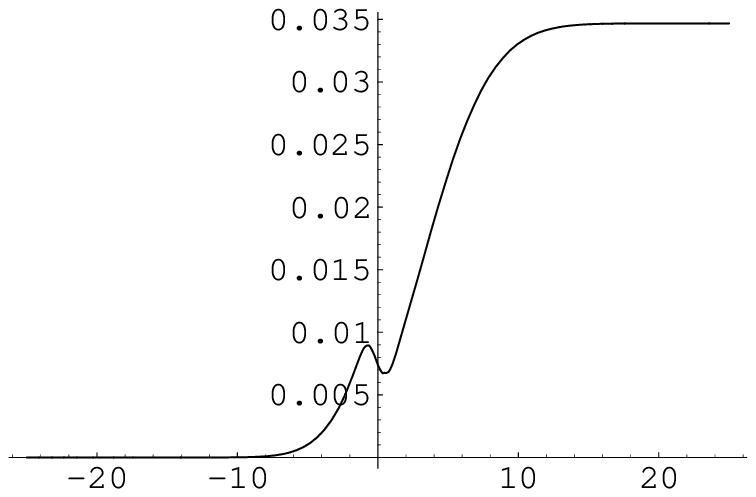}~~~~~~~~~~~~
\includegraphics[width=51mm]{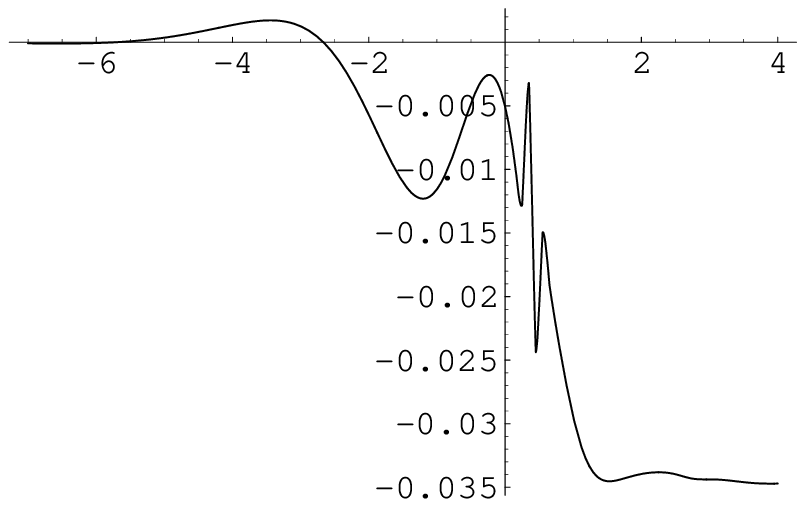}
\caption{Energy (left) and pressure (right) for solutions from Fig.\ref{rolling-solution-ph-const0}.}
\label{energy-pressure-const0}
\end{figure}

\subsection{Cosmological Applications}

To discuss cosmological consequences it is interesting to refer
to the concluding remarks and open questions of the work \cite{MZ}
which in fact initiated a whole series of investigation
of nonlocal dynamics in the models with infinitely many time
derivatives \cite{MZ}.
Original paper considered string field theory and $p$-adic theory
in Minkowski space time. It was noted that the most puzzling result
from physical point of view is that none of the
solutions obtained there appeared to represent tachyon matter.
In other words there were not found solutions where with
varying values of energy the pressure tends to zero for long
times. It is notable that solutions obtained in this work have
such behavior. As we can see on Fig.\ref{energy-pressure-0const}
the pressure goes to zero for long times at the end of the evolution.
In fact this represents first alternative of possible cosmological
applications of the obtained solutions.

The second alternative is to consider obtained solutions in the
context of description of very early Universe. This investigation is
motivated by the shape of the solutions with decreasing positive
Hubble function presented at the Fig.\ref{rolling-solution-ph-0const}.
The research in this direction was started in \cite{p-adic-infl} based
on p-adic string model  and followed by \cite{diff-infl} in which
CSSFT string model was considered as well.

The third alternative is to consider configuration from
Fig.\ref{rolling-solution-ph-const0} in the context of late time acceleration.

\newpage
\section{Summary}

In this paper we have studied tachyon dynamics in string theory in
Friedmann-Robertson-Walker background. The main results are:
\begin{itemize}
\item
It was shown that filed configurations which interpolate between vacua with
different scalar field energies are possible in the FRW background and
corresponding rolling solutions were obtained.

\item
We found nontrivial dynamics in the FRW background which differs from
dynamics in Minkowski case. Particularly, we found interpolating solution
which goes from maximum to the true minimum for which the pressure approaches
zero at long times while energy varies starting from the constant which is
equal to the brane tension and tends to zero at the end of the evolution.
\end{itemize}

Among main results we have also considered in details the
properties of constructed solutions and have shown that the
evolution in tachyon potential is possible in both directions,
and what is even more puzzling profiles for Hubble functions are
different in those cases.
In one direction Hubble function is an increasing almost monotonic
function, while in another we observe significant oscillations
during evolution. Moreover it was shown that by varying string scale
we can change the shape of Hubble function and for some
string/Planck scale ratio the oscillations disappear.
We also considered dynamics in $p$-adic string model for particular
value of a parameter $p=2$ and have shown that dynamics in the FRW
background drastically differs from the corresponding one in usual
string theory.

It is interesting to note that for numerical construction of solutions
presented here we had to abandon explicit iterative techniques which
were very successful in previous investigations of rolling tachyon solutions,
see \cite{Lulya-paper,diff-infl} and references therein.
Instead we used slower but more generic relaxation methods
with Crank-Nicholson scheme to compute nonlocal operators.

\section*{Acknowledgements}
The author would like to thank I. Aref'eva, R. Bradenberger, A.-C.
Davis, J. Khoury, N. Nunes, F. Quevedo, D. Seery, D. Wesley and
especially D. Mulryne and  Ya. Volovich for  useful discussions. The
author would also like to thank the Perimeter Institute for
hospitality while the part of this work was done.  The author
gratefully acknowledge the use of the UK National Supercomputer,
COSMOS, funded by PPARC, HEFCE and Silicon Graphics. This work is
supported by  the Centre for Theoretical Cosmology, in Cambridge.

\section*{Appendix}

In this appendix we will prove the following identity
\begin{equation}
\label{idd-app} \int\limits_0^1 d\rho (e^{ \rho
\partial^2 }\varphi ) \overleftrightarrow{\partial^2} (e^{ (1
- \rho)\partial^2 } \phi)= \varphi \overleftrightarrow
{e^{\partial^2 }}\phi,
\end{equation}
where symbol $e^{ \rho\partial^2 }\varphi$ comprehend as
\cite{Lulya-paper-2}

It is well known, that for functions $\Phi(t)$ which are continuous
and bounded on the real axis the following identity have a place
\begin{equation}
\label{lim} \lim_{\rho \to + 0} C_{\rho}[\Phi](t)=\Phi(t)
\end{equation}

We can formulate the following lemma

\begin{lemma}
For continuous and bounded functions $\psi(t)$  and $\varphi(t)$ the
following identity has a place
\begin{equation}
\label{lemma-app} \int\limits_0^1 d\rho C_{\rho}[\varphi](t)
\overleftrightarrow{\partial^2} C_{1-\rho}[\psi](t)= \varphi
\overleftrightarrow { C_{1}}\psi, ~~~\partial^2 \equiv
\frac{d^2}{dt^2}
\end{equation}
where left side we understand as
$$
\lim_{\epsilon_1\to +0} \lim_{\epsilon_2\to +0}
\int_{\epsilon_1}^{1-\epsilon_2} d\rho d\rho C_{\rho}[\varphi](t)
\overleftrightarrow{\partial^2} C_{1-\rho}[\psi](t)$$ and right hand
side we understand as
$$ \varphi
\overleftrightarrow { C_{1}}\psi= \varphi(t)
C_{1}[\psi](t)-C_{1}[\varphi](t) \psi(t)
$$
\end{lemma}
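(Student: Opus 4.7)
The plan is to exploit the fact that $u(\rho,t):=C_\rho[\varphi](t)$ satisfies the heat equation $\partial_\rho u=\partial_t^2 u$, which follows by differentiating the Gaussian kernel $\tfrac{1}{\sqrt{4\pi\rho}}e^{-(t-t')^2/(4\rho)}$ under the integral sign (legitimate for continuous bounded $\varphi$). Correspondingly, $v(\rho,t):=C_{1-\rho}[\psi](t)$ satisfies $\partial_\rho v=-\partial_t^2 v$. Then
$$
u\overleftrightarrow{\partial_t^2}v \;=\; u\,\partial_t^2 v - v\,\partial_t^2 u \;=\; -u\,\partial_\rho v - v\,\partial_\rho u \;=\; -\partial_\rho(uv).
$$
So the $\rho$--integrand is an exact derivative in $\rho$, and the whole problem collapses to boundary terms.

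Next I would insert the regularization explicitly. For $0<\epsilon_1,\epsilon_2<1/2$,
$$
\int_{\epsilon_1}^{1-\epsilon_2}\!\!d\rho\,u\overleftrightarrow{\partial_t^2}v
\;=\;-\bigl[u(\rho,t)\,v(\rho,t)\bigr]_{\rho=\epsilon_1}^{\rho=1-\epsilon_2}
\;=\;C_{\epsilon_1}[\varphi](t)\,C_{1-\epsilon_1}[\psi](t)-C_{1-\epsilon_2}[\varphi](t)\,C_{\epsilon_2}[\psi](t).
$$
Sending $\epsilon_2\to 0^+$ first, the second term tends to $C_1[\varphi](t)\,\psi(t)$ by the basic limit (\ref{lim}); sending $\epsilon_1\to 0^+$ next, the first term tends to $\varphi(t)\,C_1[\psi](t)$. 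Combining yields $\varphi\overleftrightarrow{C_1}\psi$, which is the stated identity.

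The only delicate points are (i) the differentiation under the integral sign to establish the heat equation, which is routine because the Gaussian and its derivatives decay faster than any polynomial, and (ii) the order of the two regularizing limits, which is precisely the ordering built into the definition on the left-hand side of (\ref{lemma-app}), so no Fubini-type justification is needed beyond invoking (\ref{lim}) twice. The main (mild) obstacle is therefore bookkeeping the regularization consistently, rather than any analytic difficulty; once the heat-equation observation is in place, the computation reduces to a one-line fundamental theorem of calculus in $\rho$.
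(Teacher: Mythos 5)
Your proof is correct and follows essentially the same route as the paper's: both observe that $C_\rho[\varphi]$ solves the diffusion equation in $\rho$, rewrite the integrand as the exact derivative $-\partial_\rho\bigl(C_\rho[\varphi]\,C_{1-\rho}[\psi]\bigr)$, evaluate the boundary terms on $[\epsilon_1,1-\epsilon_2]$, and pass to the limit using (\ref{lim}). Your write-up is in fact slightly more careful about the order of the two limits and about justifying differentiation under the integral sign, but the argument is the same.
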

\begin{proof}
We have
$$
\int\limits_{\varepsilon_1}^{1-\varepsilon_2} d\rho
C_{\rho}[\varphi] \overleftrightarrow{\partial^2}
C_{1-\rho}[\psi]\int\limits_{\varepsilon_1}^{1-\varepsilon_2} d\rho
C_{\rho}[\varphi] (\partial^2C_{1-\rho}[\psi])-
\int\limits_{\varepsilon_1}^{1-\varepsilon_2}d\rho ( \partial^2
C_{\rho}[\varphi])2C_{1-\rho}[\psi]
$$
We will use the fact that for $\rho>0$ the function
$C_{\rho}[\varphi](t)$ is a solution of the diffusion equation, i.e
$$
\frac{\pd^2}{\pd t^2}  C_{\rho}[\varphi](t)=\frac{\pd}{\pd \rho}
C_{\rho}[\varphi](t) ,~~~~\rho>0
$$
$$
\frac{\pd^2}{\pd t^2} C_{1-\rho}[\varphi](t)=-\frac{\pd}{\pd \rho}
C_{1-\rho}[\varphi](t),~~~~\rho>0
$$
The proofs of these identities follow from integral representation.

Taking into account the identities written above we get
$$
\int\limits_{\varepsilon_1}^{1-\varepsilon_2}
 d\rho C_{\rho}[\varphi] \overleftrightarrow{\partial^2}C_{1-\rho}[\psi]\int\limits_{\varepsilon_1}^{1-\varepsilon_2}
 d\rho C_{\rho}[\varphi]
   (\partial^2 C_{1-\rho}[\psi])-
\int\limits_{\varepsilon_1}^{1-\varepsilon_2}
  d\rho (\partial^2 C_{\rho}[\varphi]) C_{1-\rho}[\psi]
$$
$$
=-\int\limits_{\varepsilon_1}^{1-\varepsilon_2}d\rho
  \left(C_{\rho}[\varphi ]\frac{\partial}{\partial \rho}C_{1-\rho}[ \psi]+
        \frac{\partial}{\partial \rho}C_{\rho}[\varphi]C_{1-\rho}[\psi]\right)
$$
$$
=-\int\limits_{\varepsilon_1}^{1-\varepsilon_2}d\rho
\frac{\partial}{\partial \rho} ( C_{\rho}[\varphi ]C_{1-\rho}[
\psi])= -C_{1-\varepsilon_2}[\varphi ]C_{\varepsilon_2}[
\psi]+C_{\varepsilon_1}[\varphi ]C_{1-\varepsilon_1}[ \psi]
$$
and (\ref{lim}), we can  take the limit
$\varepsilon_1\rightarrow+0$, $\varepsilon_2\rightarrow+0$ and get
(\ref{lemma-app}).
\end{proof}

\end{document}